\newcounter{axioms}
\def\RR{{\mathbb R}}
\def\CC{{\mathbb C}}
\def\NN{{\mathbb N}}
\def\ZZ{{\mathbb Z}}
\def\A{{\mathcal A}}
\def\B{{\mathcal B}}
\def\H{{\mathcal H}}
\def\I{{\mathcal I}}
\def\M{{\mathcal M}}
\def\a{\alpha}
\def\b{\beta}
\def\d{\delta}
\def\f{\varphi}
\def\g{\gamma}
\def\k{\kappa}
\def\r{\rho}
\def\s{\sigma}
\def\Ad{{\hbox{\rm Ad\,}}}
\def\1{{\mathbbm 1}}
\def\Exp{{\rm Exp}}
\def\uone{{\rm U(1)}}
\def\diff{{\rm Diff}}
\def\diffs1{\diff(S^1)}
\def\vir{{\rm Vir}}
\def\Vir{{\rm Vir}}
\def\supp{{\rm supp\,}}
\def\psl2r{{\rm PSL}(2,\RR)}
\def\sl2r{{\rm SL}(2,\RR)}
\def\su11{{\rm SU}(1,1)}
\def\2dmob{{\overline{\psl2r}\times\overline{\psl2r}}}
\def\<{\langle}
\def\>{\rangle}
\def\im{\mathrm{Im}\,}
\def\hfin{\H^{\mathrm{fin}}}
\def\dom{{\mathrm{Dom}}}
\newtheorem{theorem}{Theorem}[section]
\newtheorem{proposition}[theorem]{Proposition}
\newtheorem{lemma}[theorem]{Lemma}
\theoremstyle{remark}
\title{Ground state representations of some non-rational conformal nets}
\date{}
 \author{
 {\bf Yoh Tanimoto}\footnote{Supported by Programma per giovani ricercatori, anno 2014 ``Rita Levi Montalcini''
of the Italian Ministry of Education, University and Research.}\\
 Dipartimento di Matematica, Universit\`a di Roma Tor Vergata\\
 Via della Ricerca Scientifica, 1, I-00133 Roma, Italy\\
 email: {\tt hoyt@mat.uniroma2.it}\\
 }
\begin{document}
\maketitle

\begin{abstract}
 We construct families of ground state representations of
 the $\mathrm{U}(1)$-current net and of the Virasoro nets $\vir_c$
 with central charge $c \ge 1$. We show that these representations are 
 not covariant with respect to the original dilations,
 and those on the $\uone$-current net are not solitonic.
 Furthermore, by going to the dual net with respect to the ground state representations
 of $\vir_c$, one obtains possibly new family of M\"obius covariant nets on $S^1$.
 \end{abstract}

\section{Introduction}\label{Introduction}
A model of quantum field theory may be in various states:
in quantum theory, a system is defined by the algebra of observables and
a physical state is realized as a normalized positive linear functional on it \cite{Haag96}.
Among them, the most important state on a quantum field theory is the vacuum state,
and indeed, notable axiomatizations of quantum field theory are formulated in terms of the vacuum state
or the vacuum correlation functions. Yet, some other states are of particular interest.
For example, DHR (Doplicher-Haag-Roberts, \cite{DHR69}) representations correspond to charged states.
KMS (Kubo-Martin-Schwinger, \cite[Section 5.3]{BR2}) states represent thermally equilibrium states.
Limiting cases of KMS states, where the temperature is zero, are called
ground states and they enjoy particular properties.
In this paper, we construct ground states on chiral components of certain two-dimensional conformal field theories.

Ground states can be characterized in various ways, see e.g.\! \cite{BR2}.
In one of them, ground states are invariant under the dynamics
and the generator of the dynamics has positive generator.
In the operator-algebraic setting, this is particularly interesting, because
if one has a ground state on a one-dimensional Haag-Kastler net, one can take the dual net in the GNS
(Gelfand-Naimark-Segal, \cite[Section 2.3.3]{BR1}) representation,
The dual net should be considered not as an extension but as a possibly new, different model
from the original net, because it has a different M\"obius symmetry.
Let us recall that there is a certain relation between the failure of Haag duality and
symmetry breaking (see \cite[Section 1]{BDLR92} for this and some implications on Goldstone bosons),
but this does not necessarily hold in $(1+1)$-dimensions,
because a Haag-Kastler net is not always the fixed point net of a larger net with respect to
a Lie group.

In this work, we focus on the simplest class of quantum field theory:
chiral components of two-dimensional conformal field theory.
Previously we studied KMS states on them \cite{CLTW12-1, CLTW12-2}\footnote{Let us stress
that we study here KMS states and ground states with respect to the spacetime translations.
We studied the KMS states with respect to rotations in a previous work \cite{LT18}
and discussed possible applications to 3d black holes.}.
Differently from KMS states, it appears to be difficult to classify ground states
for a given net: the techniques to classify KMS states do not directly apply
to ground states, because we do not have a direct connection between ground states and
modular automorphisms, which was crucial in \cite[Section 4.2.1, Appendix B]{CLTW12-1}\footnote{
The arguments for the case of maximal nets \cite[Theorem 4.7]{CLTW12-1} might still work for ground states.},
nor extension results
of KMS states to larger net (cf.\! \cite{AHKT77}\cite[Appendix A]{CLTW12-2}) are available for ground states.
On the other hand, we classified ground states on loop algebras \cite{Tanimoto11},
yet it is still unclear whether the results passes to ground states on the corresponding Haag-Kastler nets.
Therefore, rather than classifying them, we try to construct various examples.

Contrary to the case of loop algebras where ground state is unique \cite{Tanimoto11},
we find that the $\mathrm{U}(1)$-current net and the Virasoro nets with $c \ge 1$ admit continuously many ground states. 
It turns out that the $\mathrm{U}(1)$-current net admits a one-parameter family of automorphisms
commuting with translations, and by composing them with the vacuum state, we obtain
a family of ground states. As the Virasoro nets with $c \ge 1$ restricted to $\RR$ can be embedded
in the $\mathrm{U}(1)$-current net \cite{BS90}, they also admit a family of ground states.
In this way, we construct a family of pure ground states on the $\mathrm{U}(1)$-current net
parametrized by $q \in \RR$, and those on the Virasoro nets with $c \ge 1$ parametrized
by $\frac{q^2}2 \ge 0$.
In the GNS representations of these ground states, one can take the dual net.
While the dual nets in the case of the ground states on the $\mathrm{U}(1)$-current net
are unitarily equivalent to the $\mathrm{U}(1)$-current net itself,
for the case of the Virasoro nets with $c > 1$
the dual nets must be different from the original net, since the latter are not strongly additive.

We also obtain explicit expressions of the current and the stress-energy tensor in the GNS representation.
These expressions in turn serve to show that the GNS representations of the $\uone$-current net
are not normal on half-lines.
This shows that the implication ``positivity of energy $\Longrightarrow$ solitonic representation'',
conjectured in \cite[Conjecture 34]{Henriques17} for loop groups, does not hold for the $\mathrm{U}(1)$-current net.
%and the Virasoro nets with $c\ge 1$.

This paper is organized as follows.
In Section \ref{preliminaries} we recall M\"obius covariant nets, operator-algebraic setting
of chiral components of conformal field theory and collect general facts on ground states.
In Section \ref{examples} we introduce examples of M\"obius covariant nets, the $\uone$-current net and the Virasoro nets.
In Section \ref{groundstates} we construct ground states on these nets and study their property.
We conclude with some open problems in Section \ref{outlook}.

\section{Preliminaries}\label{preliminaries}
\subsection{M\"obius covariant net on circle}\label{mob}
The following is a mathematical setting for chiral components of two-dimensional conformal field theory.
These chiral components are essentially quantum field theories on the real line $\RR$, and by conformal covariance
they extend to $S^1$. In the operator-algebraic setting, they are realized as
nets (precosheaves) of von Neumann algebras on $S^1$. More precisely, let $\I$ be the set of open, non-dense, non empty intervals in $S^1$.
A {\bf M\"obius covariant net} $(\A, U, \Omega)$ is a triple of
the map $\A: \I \ni I \mapsto \A(I) \subset \B(\H)$, where $\A(I)$ is a von Neumann algebra on a Hilbert space $\H$ and $\B(\H)$ is the set of all bounded operators on $\H$,
$U$ a strongly continuous representation of the group $\psl2r$ on $\H$
and a unit vector $\Omega \in \H$ satisfying the following conditions:
\begin{enumerate}[{(MN}1{)}]
 \item\label{isotony} {\bf Isotony}: If $I_1 \subset I_2$, then $\A(I_1) \subset \A(I_2)$.
 \item\label{locality} {\bf Locality}: If $I_1 \cap I_2 = \emptyset$, then $\A(I_1)$ and $\A(I_2)$ commute.
 \item\label{covariance} {\bf M\"obius covariance}: for $\g \in \psl2r$, $\Ad U(\g)(\A(I)) = \A(\g I)$.
 \item\label{positive} {\bf Positive energy}: the generator $L_0$ of rotations $U(\rho(t)) = e^{itL_0}$ is positive.
 \item\label{vacuum} {\bf Vacuum}: $\Omega$ is the unique (up to a phase) invariant vector for $U(\g), \g\in \psl2r$
 and $\overline{\A(I)\Omega} = \H$ (the {\bf Reeh-Schlieder property}).
 \setcounter{axioms}{\value{enumi}}
\end{enumerate}
From (MN\ref{isotony})--(MN\ref{vacuum}), one can prove the following \cite{GF93, FJ96}:
\begin{enumerate}[{(MN}1{)}]
\setcounter{enumi}{\value{axioms}}
\item\label{haag} {\bf Haag duality}: $\A(I') = \A(I)'$, where $I'$ is the interior of the complement of $I$.
\item\label{additivity} {\bf Additivity}: If $I = \bigcup I_j$, then $\A(I) = \bigvee_j \A(I_j)$.
\end{enumerate}
We consider also the following additional properties:
\begin{itemize}
 \item {\bf Strong additivity}: If $I_1$ and $I_2$ are the intervals obtained from $I$ by removing one point,
 then $\A(I_1)\vee \A(I_2) = \A(I)$.
 \item {\bf Conformal covariance}: $U$ extends to a projective unitary representation of the
 group $\diff_+(S^1)$ of orientation preserving diffeomorphisms 
 and $\Ad U(\g)\A(I) = \A(\g I)$, and $\Ad U(\g)(x) = x$ if $I \cap \supp \g = \emptyset$.
\end{itemize}
Strong additivity does not follow from (MN\ref{isotony})--(MN\ref{vacuum}),
and indeed the Virasoro nets with $c>1$ fail to have strong additivity \cite[Section 4]{BS90}.
If conformal covariance holds, $(\A, U, \Omega)$ is called a {\bf conformal net}.

Concrete examples relevant to this work will be presented in Section \ref{examples}.
\subsection{Ground state representations}\label{groundgeneral}
Let us identify $\RR$ with a dense interval in $S^1$ by the stereographic projection.
By convention, the point of infinity $\infty$ corresponds to $-1 \in S^1 \subset \CC$.
We denote by $\A|_\RR$ the net of algebras $\{\A(I): I \Subset \RR\}$
restricted to finite open intervals in the real line $\RR$.
Translations $\tau(t)$ and dilations $\d(s)$ acts on $\A|_\RR$, and they are elements of $\psl2r$.

$\Ad U(\tau(t))$ acts on $\overline{\bigcup_{I \Subset \RR} \A(I)}^{\|\cdot\|}$ as an automorphism.
A {\bf state $\f$} on $\A|_\RR$ is by definition a state on the $C^*$-algebra $\overline{\bigcup_{I \Subset \RR} \A(I)}^{\|\cdot\|}$.
Let $\f$ be a state on $\A|_\RR$ and
let $\pi_\f$ be the GNS representation with respect to $\f$.
If $\pi_\f$ extends to $\A(\RR_\pm)$ in the $\s$-weak topology, then
we say that $\pi_\f$ is {\bf solitonic}.
If there is a representation $U^\f$ of the translation group $\RR$ such that
$\Ad U^\f(\tau(t))(\pi_\f(x)) = \pi_\f(\Ad U(\tau(t))(x))$,
then we say that $\pi_\f$ is {\bf translation-covariant}.
Furthermore, if the generator of $U^\f$ can be taken positive,
then we say $\pi_\f$ is a {\bf positive-energy representation}.

If a state $\f$ on $\A|_\RR$ is invariant under $\Ad U(\tau(t))$,
then $\pi_\f$ is automatically translation-covariant,
because one can define $U^\f(\tau(t))\pi_\f(x)\Omega_\f = \pi_\f(\Ad U(\tau(t))(x))\Omega_\f$,
where $\Omega^\f$ is the GNS vector for $\f$.
If furthermore $U^\f$ has positive energy,
then we say that it is a {\bf ground state representation},
and the state $\f$ is called the {\bf ground state}.
By the Reeh-Schlieder argument (see e.g.\! \cite[Theorem 1.3.2]{Baumgaertel95}, \cite[Theorem 3.2.1]{Longo08}),
the GNS vector $\Omega_\f$ is cyclic for $\pi_\f(\A(I))$ for any $I\Subset \RR$.

If $\pi_\f$ is a ground state representation with the ground state vector $\Omega_\f$,
then one can consider the translation-covariant net $\{\pi_\f(\A(I))\}$ on intervals in $\RR$ on the GNS representation space $\H_\f$.
Its dual net is defined by $\hat\A_\f(\RR_+ + a) := \left(\bigvee_{I \Subset \RR_+ + a}\pi_\f(\A(I))\right)''$
and $\hat \A_\f((a,b)) = \hat\A_\f(\RR_+ + b) \cap \hat\A_\f(\RR_+ + a)'$.
By \cite[Corollary 1.9]{GLW98}, $\hat\A$ extends uniquely to a strongly additive M\"obius covariant net on $S^1$:
\begin{theorem}[Guido-Longo-Wiesbrock]
 There is a one-to-one correspondence between
 \begin{itemize}
  \item Isomorphism classes of strongly additive M\"obius covariant nets
  \item Isomorphism classes of Borchers triples $(\M, U, \Omega)$,
  ($\M$ is a von Neumann algebra, $U$ is a representation of $\RR$ with positive generator,
  $\Omega$ is cyclic and separating for $\M$ and $U(t)\Omega=\Omega$ such that
  $\Ad U(t)(\M) \subset \M$ for $t\ge 0$) with the property that $\Omega$ is cyclic for $\Ad U(t)(\M)'\cap \M$
 \end{itemize}
\end{theorem}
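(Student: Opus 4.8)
The plan is to construct the two maps of the correspondence and check they are mutually inverse up to isomorphism. The direction from nets to triples is the routine one. Given a strongly additive M\"obius covariant net $(\A, U, \Omega)$, I would identify $\RR$ with $S^1\setminus\{-1\}$, set $\M := \A(\RR_+)$ with $\RR_+=(0,\infty)$, take $U$ to be the restriction of the M\"obius representation to the translation subgroup, and keep the vacuum $\Omega$. Positivity of energy (MN\ref{positive}) forces the translation generator $P$ to be positive: this is the standard fact that in a positive-energy representation of $\psl2r$ the rotation generator $L_0$ and the translation generator are simultaneously positive, being related through a fixed Cayley element. Vacuum invariance gives $U(t)\Omega=\Omega$; Reeh--Schlieder together with Haag duality (MN\ref{haag}) makes $\Omega$ cyclic and separating for $\M$; and isotony gives $\Ad U(t)(\M)\subset\M$ for $t\ge 0$ since $\RR_++t\subset\RR_+$. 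Finally, Haag duality and strong additivity identify $\Ad U(t)(\M)'\cap\M=\A((0,t))$, so Reeh--Schlieder yields the required cyclicity of $\Omega$ for this relative commutant.

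The harder direction reconstructs the net from a Borchers triple $(\M,U,\Omega)$. First I would invoke Borchers' commutation theorem: since $U$ has positive generator, fixes $\Omega$, and satisfies $\Ad U(t)(\M)\subset\M$ for $t\ge 0$, the modular data $(\Delta,J)$ of $(\M,\Omega)$ obey
\[
 \Delta^{it}U(s)\Delta^{-it}=U(e^{-2\pi t}s),\qquad JU(s)J=U(-s).
\]
These relations say precisely that the translations $U(s)$ together with the boosts $\Delta^{it}$ generate a unitary representation of the affine ``$ax+b$'' group, with $P$ the generator of $U$ and $D:=-\tfrac{1}{2\pi}\log\Delta$ the dilation generator.

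The crux is to promote this affine representation to a representation of the full M\"obius group $\psl2r$. Here I would use the half-sided modular inclusion $\N:=\Ad U(1)(\M)\subset\M$, which is standard thanks to the cyclicity hypothesis on the relative commutant. Wiesbrock's analysis of such inclusions shows that the modular flow of $\N$, combined with $U$, $\Delta^{it}$, and the reflection $J$, closes into the $\mathfrak{sl}(2,\RR)$ relations
\[
 [L_0,L_{\pm1}]=\mp L_{\pm1},\qquad [L_1,L_{-1}]=2L_0,
\]
with the missing special-conformal generator $L_1$ produced from the second modular group. Positivity of $P$ is exactly what guarantees these generators integrate to a genuine unitary representation of $\psl2r$ rather than a mere Lie-algebra representation. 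With $U$ in hand I would set $\A(I):=\Ad U(g)(\M)$ for any $g\in\psl2r$ taking $\RR_+$ to $I$; this is well defined because the stabilizer of $\RR_+$ consists of dilations, implemented by $\Delta^{it}$, which preserve $\M$.

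It then remains to verify the net axioms: isotony from $\N\subset\M$, locality from $J\M J=\M'$ and the geometric placement of complementary intervals, covariance by construction, positive energy from the hypothesis on $P$, and the vacuum/Reeh--Schlieder property from cyclicity of $\Omega$. Strong additivity is precisely the cyclicity of $\Omega$ for $\Ad U(t)(\M)'\cap\M$ read back through the reconstruction. Finally the two constructions are inverse up to unitary equivalence because each recovers the same modular and translation data, and both respect isomorphisms since all objects are built functorially from $(\M,U,\Omega)$. The main obstacle is the middle step---extending the affine representation to $\psl2r$---since checking the commutation relations and, above all, the integrability of the generators is the technical heart of the Wiesbrock machinery, and it is there that the positivity of the energy is indispensable.
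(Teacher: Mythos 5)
The paper offers no proof of this statement: it is quoted verbatim from Guido--Longo--Wiesbrock \cite[Corollary 1.9]{GLW98}, so there is nothing internal to compare against. Your sketch does follow the strategy of the cited work (Borchers' commutation theorem, half-sided modular inclusions, reconstruction of the net by M\"obius covariance), and the easy direction (net $\to$ triple) is essentially complete, including the correct observation that strong additivity is what identifies $\Ad U(t)(\M)'\cap\M$ with $\A((0,t))$.

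There is, however, a genuine misstep at what you yourself call the technical heart. You say the inclusion $\N=\Ad U(1)(\M)\subset\M$ is ``standard thanks to the cyclicity hypothesis on the relative commutant'' and that the missing generator $L_1$ is ``produced from the second modular group,'' apparently meaning the modular group of $\N$. Neither is right. Standardness of $\N\subset\M$ (i.e.\ $\Omega$ cyclic and separating for both) is automatic, since $U(1)\Omega=\Omega$ transports cyclicity and separation from $\M$ to $\N$; it costs nothing. And the modular group of $\N$ gives nothing new: $\Delta_\N^{it}=U(1)\Delta_\M^{it}U(-1)$ already lies in the $ax+b$ group generated by $U$ and $\Delta_\M^{it}$, so it cannot supply a third independent one-parameter subgroup of $\psl2r$. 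The third generator must come from the modular group of the \emph{relative commutant} $\N'\cap\M$ (the would-be interval algebra $\A((0,1))$), and this is precisely where the hypothesis that $\Omega$ be cyclic for $\Ad U(t)(\M)'\cap\M$ is indispensable: without it the relative commutant need not be in standard form with respect to $\Omega$ and has no usable modular data, and the correspondence fails. Relatedly, your worry about ``integrating'' the $\mathfrak{sl}(2,\RR)$ relations is misplaced: in the Wiesbrock--GLW machinery one never passes through unbounded Lie-algebra generators; the commutation relations are established directly between the one-parameter unitary groups ($U$, $\Delta_\M^{it}$, $\Delta_{\N'\cap\M}^{it}$) via Borchers' theorem and its converse, with positivity of the generator of $U$ entering there, not in an integrability argument. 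Finally, when you define $\A(I):=\Ad U(g)(\M)$ for half-lines you still need to \emph{define} the bounded-interval algebras as relative commutants in order for strong additivity (equivalently, duality on the line) of the reconstructed net to come out; it is not automatic from covariance alone.
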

Even if the given net $\A$ is conformally covariant,
it is not known whether the net $\hat \A_\f$ is conformally covariant.
Although $\hat \A_\f$ is $\diff_\mathrm{c}(\RR)$-covariant in the natural sense,
where $\diff_\mathrm{c}(\RR)$ is the group of diffeomorphisms of $\RR$ with compact support,
we do not have the uniqueness results of $\diff_\mathrm{c}(\RR)$-action (cf.\! \cite[Theorem 5.5]{CW05} for a uniqueness
theorem for $\diff_+(S^1)$-action).
See \cite[Chapter 4]{WeinerThesis} for some attempts to construct $\diff_+(S^1)$-covariance),
and \cite[Section 4.2]{MT18} for some examples in $(1+1)$-dimensions where M\"obius
covariance gets lost by passing to the dual net.

The purpose of this paper is to construct ground state representations of certain nets.
One of the ideas to construct ground state representations is the following:
an {\bf automorphism} $\a$ of the net $\A|_\RR$ is a family $\{\a_I\}_{I\Subset \RR}$ of automorphisms of $\{\A(I)\}$
such that $\a_{\tilde I}|_{\A(I)} = \a_I$ for $I \subset \tilde I$.
$\a$ extends naturally to $\overline{\bigcup_{I \Subset \RR} \A(I)}^{\|\cdot\|}$.
\begin{lemma}\label{lm:composed}
 If $\a$ commutes with $\Ad U(\tau(t))$, then $\omega\circ\a$ is a ground state,
 where $\omega = \<\Omega, \cdot\,\Omega\>$ is the vacuum state.
\end{lemma}
\begin{proof}
 The GNS representation of $\omega\circ\a$ is given by $\pi_\omega\circ\a = \a$,
 where $\pi_\omega$ is the vacuum representation, the identity map.
 As $\a$ commutes with $\Ad U(\tau(t))$, the same representation $U(\tau(t))$ with positive energy
 implements translations.
\end{proof}
Furthermore, combining \cite[Lemma 4.4, Lemma 4.5]{CLTW12-1}, we obtain the following dichotomy:
if $\A|_\RR$ admits one nontrivial ground state, then there must be continuously many ground states.
\begin{proposition}\label{pr:dilated}
 If there is an automorphism $\a$ of $\A|_\RR$ commuting with translations
 and if $\omega\circ\a \neq \a$,
 then $\{\omega\circ\a \circ \Ad U(\d(s))\}_{s\in\RR}$ are mutually different
 and their GNS representations $\{\Ad U(\d(-s))\circ\a\circ \Ad U(\d(s))\}_{s\in\RR}$ are mutually unitarily inequivalent. 
\end{proposition}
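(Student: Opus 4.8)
The plan is to realize the whole family as a single dilation orbit of the ground state $\psi:=\omega\circ\alpha$, and then to separate the two assertions: \emph{distinctness of the states} from \emph{inequivalence of the representations}. First I would record the $\psl2r$ relation $\delta(s)\tau(t)\delta(-s)=\tau(e^s t)$ and use it to check that each conjugate $\alpha_s:=\Ad U(\delta(-s))\circ\alpha\circ\Ad U(\delta(s))$ is again an automorphism of $\A|_\RR$ commuting with translations (the inner dilations pass through $\alpha$ and recombine with the outer ones). Lemma \ref{lm:composed} then makes $\phi_s:=\omega\circ\alpha_s$ a ground state; since $\omega$ is dilation invariant, $\phi_s=\omega\circ\alpha\circ\Ad U(\delta(s))$, its GNS representation is the normal map $\alpha_s$ on $\H$ with GNS vector the original $\Omega$, and translations are implemented by the original $U(\tau(t))$, exactly as in the proof of Lemma \ref{lm:composed}. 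This matches the representations written in the statement.

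For distinctness I would next note the orbit relation $\phi_s\circ\Ad U(\delta(r))=\phi_{s+r}$, immediate from dilation invariance of $\omega$. Hence $\phi_s=\phi_{s'}$ with $s\ne s'$ would force the single ground state $\psi$ to satisfy $\psi=\psi\circ\Ad U(\delta(d))$ with $d=s'-s\ne0$, and therefore $\psi=\psi\circ\Ad U(\delta(nd))$ for all $n\in\ZZ$. To contradict the hypothesis $\omega\circ\alpha\ne\omega$, I would invoke that the scaling limit of a ground state is the vacuum, i.e. $\psi\circ\Ad U(\delta(s))\to\omega$ weak-$*$ as $s\to-\infty$ (the content I would extract from \cite[Lemma 4.4, Lemma 4.5]{CLTW12-1}); letting $nd\to-\infty$ along the invariant family then gives $\psi=\omega$, a contradiction. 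This yields that the $\phi_s$ are mutually different.

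For inequivalence, suppose a unitary $V$ intertwines the representations, $V\alpha_s(\,\cdot\,)V^*=\alpha_{s'}(\,\cdot\,)$. Evaluating in $\Omega$ gives $\phi_{s'}(x)=\langle V^*\Omega,\alpha_s(x)\,V^*\Omega\rangle$, so the translation-invariant state $\phi_{s'}$ is the vector state of $\xi:=V^*\Omega$ in the representation $\alpha_s$. Using that translations act in $\alpha_s$ by $U(\tau(\cdot))$ and that the vacuum representation is irreducible (so a vector state determines its vector up to a phase), translation invariance of this vector state forces $U(\tau(t))^*\xi$ to be a phase multiple of $\xi$; thus $\xi$ is an eigenvector of the translation generator $P\ge0$. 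Since in the vacuum representation $\ker P=\CC\Omega$ and $P$ has no further point spectrum, $\xi\in\CC\Omega$, whence $\phi_{s'}=\phi_s$, contradicting the distinctness just established. Hence the $\alpha_s$ are mutually unitarily inequivalent.

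The main obstacle is the distinctness step, and precisely the exclusion of invariance of a nontrivial ground state under a \emph{single} nontrivial dilation: positive energy together with translation invariance leaves open only discrete dilation invariance, and turning this into a contradiction requires the clustering/scaling input that the dilated ground state converges weak-$*$ to the vacuum. Securing (or correctly citing) that limit is where the real work lies; by contrast the inequivalence assertion is comparatively soft, since it reduces to distinctness via the uniqueness of the translation-invariant vector in the vacuum representation.
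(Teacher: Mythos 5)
Your argument is correct and matches the paper's intended route: the paper gives no proof of Proposition~\ref{pr:dilated} beyond the citation of \cite[Lemmas 4.4 and 4.5]{CLTW12-1}, whose content is precisely the two ingredients you use --- the scaling limit $\omega\circ\a\circ\Ad U(\d(s))\to\omega$ (valid because $\omega\circ\a$ is locally normal, $\a$ being a net automorphism) for distinctness, and the uniqueness of the translation-invariant vector in the irreducible vacuum representation for unitary inequivalence. Your identification of the scaling-limit step as the one genuinely imported from the cited lemmas is accurate; the rest of your reconstruction (the orbit relation, the GNS triple $(\H,\a_s,\Omega)$, and the eigenvector argument for the intertwiner) is sound and consistent with the analogous arguments the paper itself makes for the Virasoro case.
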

Note that $\a$ as in Proposition \ref{pr:dilated} cannot be implemented by a unitary operator,
because $\Omega$ is unique up to a phase. Hence, the second statement
implies that the dilation automorphisms $\{\Ad U(\d(s))\}$ are not unitarily implemented in the GNS representations
(namely, there are no $U^\a(s)$ such that $\Ad U^\a(s)\circ \a = \a \circ \Ad U(\d(s))$):
if they were implemented, then it would imply that $\a$ and $\Ad U(\d(-s))\circ\a\circ\Ad U(\d(s))$ are
unitarily equivalent, which is a contradiction.

\section{Main examples}\label{examples}
Here we introduce examples of conformal nets.
Although our emphasis is on their restriction to $\RR$, they are most conveniently defined on $S^1$.
Our treatment of these models follows that of \cite[Section 4.1]{CLTW12-2}.

\subsection{The \texorpdfstring{$\mathrm{U}(1)$}{U(1)}-current net}\label{uone}
Buchholz, Mack and Todorov obtained some fundamental results in the operator-algebraic treatment 
of the $\uone$-current algebra \cite{BMT88}.
The current $J(f)$ can be defined by the Fourier modes $\{J_n\}_{n\in \ZZ}$,
where $J_n$ satisfy the following commutation relations $[J_m, J_n] = m\delta_{m+n,0}$
and hence the current $J$ as a quantum field on $S^1$ satisfies the following relation:
$ [J(f), J(g)] = i\int_{S^1} f(\theta)g'(\theta)d\theta$, for $f, g \in C^\infty(S^1, \RR)$.
It admits a vacuum state $\<\Omega, \cdot\, \Omega\>$ such that $J_n\Omega = 0$ for $n \ge 0$,
and the representation is uniquely determined by this property, unitarity $J_n^* = J_{-n}$
and the commutation relation.
The smeared current $J(f) = \sum_n \hat f_n J_n$, where $f\in C^\infty(S^1,\RR)$ and
$\hat f_n = \frac1{2\pi}\int f(\theta)e^{-in\theta}$ is the Fourier-coefficient,
is essentially self-adjoint
on the subspace $\hfin$ spanned by vectors $J_{-n_1}\cdots J_{-n_k}\Omega$, $n_j \in \NN$.

The vacuum state is invariant under $\psl2r$-transformations
$J(f)\mapsto J(f\circ \g^{-1}), \g\in\psl2r$, hence this can be implemented by a unitary operator $U(\g)$,
and in this way we obtain a unitary representation $U$ of $\psl2r$
with positive energy.

The exponential $W(f) = e^{iJ(f)}$ is called the Weyl operator.
They satisfy $W(f)W(g) = e^{-\frac i2\s(f,g)}W(f+g)$,
where $\s(f,g) = \int_{S^1} f(\theta)g'(\theta)d\theta$.
If we define von Neumann algebras by $\A_\uone(I) = \{W(f): \supp f \subset I\}''$,
then $(\A_\uone, U, \Omega)$ is a M\"obius covariant net.
This is called the {\bf $\uone$-current net}.
It is known that it satisfies strong additivity \cite[(4.21) and below]{BS90}
and conformal covariance \cite[Section 5.3]{Ottesen95}.

We can also define $J$ in the real line picture.
To do this, note that the above commutation relation is invariant under diffeomorphisms.
Therefore, if we introduce $J(f)$ for $f \in C^\infty_\mathrm{c}(\RR,\RR)$
by $J(f\circ C^{-1})$ where $C(t) = -\frac{t-i}{t+i}$ is the Cayley transform mapping $\RR \to S^1$
(with a slight abuse of notation: the definition of $J(f)$ depends on whether $f \in C^\infty_\mathrm{c}(\RR)$
or $f\in C^\infty(S^1)$), we obtain the same commutation relation
\begin{align}\label{eq:u1comm}
 [J(f), J(g)] = i\int_\RR f(t)g'(t)dt, \qquad f, g \in C^\infty(\RR, \RR).
\end{align}
We call it $J$ in the real line picture. Again by the diffeomorphism covariance,
translations and dilations acts on $J$ in the natural way: if $\g$ is such a transformation,
then $\Ad U(\g)(J(f)) = J(f\circ \g^{-1})$.

\begin{lemma}\label{lm:nonsmooth}
 For a function $f$ on $S^1$ such that $\sum_{k\ge 0} k|\hat f_k|^2 < \infty$,
 $J(f)$ and $W(f)$ can be defined by continuity.
\end{lemma}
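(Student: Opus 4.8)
The plan is to recognize $J(f)$ as a Segal field operator on the bosonic Fock space and to read the hypothesis as exactly the condition that the associated one-particle vector lies in the one-particle Hilbert space; essential self-adjointness and the construction of $W(f)$ then follow from standard Fock-space estimates together with an approximation by smooth functions. First I would rewrite the current in terms of creation and annihilation operators: for $n>0$ set $a_n = J_n/\sqrt n$ and $a_n^* = J_{-n}/\sqrt n$, so that $[a_m,a_n^*]=\delta_{mn}$ and $a_n\Omega=0$. Since $f$ is real we have $\hat f_{-n}=\overline{\hat f_n}$, and hence on $\hfin$
\begin{align*}
 J(f) = \hat f_0 J_0 + \sum_{n>0}\sqrt n\bigl(\overline{\hat f_n}\,a_n^* + \hat f_n\,a_n\bigr),
\end{align*}
which is the Segal field $\Phi(h)$ associated with the one-particle vector $h(f)=(\sqrt n\,\overline{\hat f_n})_{n>0}$ (the charge term $\hat f_0 J_0$ being a fixed self-adjoint operator with $J_0\Omega=0$ that commutes with $\Phi(h)$ and is treated separately). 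The one-particle norm is $\|h(f)\|^2=\sum_{n>0}n|\hat f_n|^2$, so the hypothesis $\sum_{k\ge0}k|\hat f_k|^2<\infty$ is precisely the statement that $h(f)$ belongs to the one-particle Hilbert space.

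The next step is the approximation argument. Given $f$ satisfying the hypothesis, let $f_j = \sum_{|k|\le j}\hat f_k e^{ik\theta}$ be the truncated Fourier series; each $f_j$ is smooth, so $J(f_j)$ is essentially self-adjoint on $\hfin$ by the discussion recalled above, and $\|h(f_j)-h(f)\|^2=\sum_{n>j}n|\hat f_n|^2\to0$. On finite-particle vectors I would invoke the standard bound $\|\Phi(h)\psi\|\le C\|h\|\,\|(N+\1)^{1/2}\psi\|$, where $N$ is the number operator; applied to $h(f_j)-h(f_l)$ this gives $\|(J(f_j)-J(f_l))\psi\|\le C\|h(f_j)-h(f_l)\|\,\|(N+\1)^{1/2}\psi\|\to0$ for each $\psi\in\hfin$. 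Thus $J(f_j)$ converges on $\hfin$ to a symmetric operator, which I \emph{define} to be $J(f)$. Essential self-adjointness is then obtained from Nelson's analytic vector theorem: $\hfin$ is invariant under $J(f)$ and consists of analytic vectors, as one checks from the Fock estimate $\|\Phi(h(f))^m\Omega\|\le (2\|h(f)\|)^m (m!)^{1/2}$.

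Finally, for the Weyl operators I would pass to the unitary groups. Since the $J(f_j)$ are essentially self-adjoint on the common core $\hfin$ and converge there to the essentially self-adjoint $J(f)$, they converge in the strong resolvent sense, whence $W(f_j)=e^{iJ(f_j)}\to e^{i\overline{J(f)}}=:W(f)$ strongly; the limit is unitary, is independent of the approximating sequence, and the Weyl relations $W(f)W(g)=e^{-\frac i2\s(f,g)}W(f+g)$ pass to the limit by continuity. Equivalently, one may argue directly on the vacuum: the quasi-free expectation $\langle\Omega,W(g)\Omega\rangle=e^{-\frac12\|h(g)\|^2}$ together with the Weyl relations yields $\|(W(f_j)-W(f_l))\Omega\|^2 = 2-2\Re\langle\Omega,W(f_l)^*W(f_j)\Omega\rangle\to0$, and the estimate is propagated to all of $\H$ via the group law and the Reeh--Schlieder density of $\{W(g)\Omega\}$.

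The main obstacle is the self-adjointness of the limiting field for non-smooth $f$: the pointwise-on-$\hfin$ convergence only produces a symmetric operator, and upgrading this to essential self-adjointness requires the analytic-vector estimate, while the convergence $W(f_j)\to W(f)$ must be justified as strong resolvent convergence (or via the explicit two-point function). The zero mode $\hat f_0 J_0$ needs separate but trivial treatment, since it commutes with the positive modes and is unaffected by the truncation, so it plays no role in the convergence.
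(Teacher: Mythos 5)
Your proof is correct in substance, but it takes a noticeably heavier route than the paper. The paper's argument stays entirely at the level of the vacuum vector and the Weyl unitaries: it first shows that $J(f_n)\Omega\to J(f)\Omega$ forces $W(f_n)\to W(f)$ strongly, using only the quasi-free formula $\<\Omega,W(g)\Omega\>=e^{-\frac12\|J(g)\Omega\|^2}$, the identity $\s(f,g)=2\im\<J(f)\Omega,J(g)\Omega\>$, the Weyl relations, totality of the vectors $W(g)\Omega$ and uniform boundedness of the $W(f_n)$; it then observes that $\|J(f)\Omega\|^2=\sum_{k\ge0}k|\hat f_k|^2$, so the one-particle vector (and with it the whole construction) extends by continuity under the stated hypothesis. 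This is exactly your ``alternative'' closing argument, so the Weyl-operator half of your proof subsumes the paper's. Where you genuinely diverge is in treating $J(f)$ itself as a Segal field $\Phi(h(f))$, proving operator convergence on the finite-particle domain via the $\|(N+\1)^{1/2}\cdot\|$ bound, and invoking Nelson's theorem plus strong resolvent convergence. This buys you more: a self-adjoint operator $J(f)$ with an identified core, not merely a coherent definition of $W(f)$ as a strong limit. The paper does not need this extra strength for its applications (it only ever uses $W(g_n)\to W(g)$), which is why its proof is shorter.

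One technical slip to fix in your version: for non-smooth $f$ the subspace $\hfin$ (finite linear combinations of $J_{-n_1}\cdots J_{-n_k}\Omega$) is \emph{not} invariant under $J(f)=\Phi(h(f))$, since applying $a^*(h(f))$ produces an infinite sum over modes; the invariant domain you want is the full finite-particle subspace (the algebraic direct sum of the $n$-particle spaces), on which the Segal field is essentially self-adjoint by the standard analytic-vector estimate. This is a repair of the domain, not of the idea, and the rest of your argument (strong resolvent convergence of $J(f_j)$ on a common core, hence strong convergence of $W(f_j)$, and persistence of the Weyl relations since $\s(f_j,g)\to\s(f,g)$) goes through unchanged.
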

\begin{proof}
  Note first that if $J(f_n)\Omega\to J(f)\Omega$, then $W(f_n)\to W(f)$ in the strong operator topology.
 Indeed, recall that $\<\Omega, W(g)\Omega\> = e^{-\frac12\|J(g)\Omega\|^2}$ \cite[Section 6.5]{Longo08}
 and $\s(f,g) = 2\im\<J(f)\Omega, J(g)\Omega\>$.
 Therefore, it is straightforward to see that, if $J(f_n)\Omega \to J(f)\Omega$, then $W(f_n)\Omega\to W(f)\Omega$
 and $\s(f_n,g) \to \s(f,g)$.
 It then follows that for a fixed $g$
 \begin{align*}
  \|(W(f_n)-W(f))W(g)\Omega\| \to 0,
 \end{align*}
 and since $W(g)\Omega$ is total and $\{W(f_n)\}$ are unitary, this implies the claimed strong convergence.
 
 For a smooth function $f$ on $S^1$, $\|J(f)\Omega\|^2 = \sum_{k\ge 0} k|\hat f_k|^2$,
 and hence $J$ can be extended by continuity to any $L^2$-function $f$ on $S^1$
 such that $\sum_{k\ge 0} k|\hat f_k|^2 < \infty$.
\end{proof}
If $f$ is piecewise smooth and continuous, then $\sum_{k\ge 0} k|\hat f_k|^2 < \infty$.
Indeed, as an operator on $L^2(S^1)$, $i\frac{d}{d\theta}$ is self-adjoint
on the  following domain \cite[X.1 Example 1]{RSII}
\[
 \{f \in C(S^1): f \text{ is absolutely continuous}, f' \text{ is in } L^2(S^1)\}
\]
and $\sum_k |\hat f'_k|^2 = \sum_k k^2|\hat f_k|^2 < \infty$, therefore, $\sum_{k\ge 0} k|\hat f_k|^2 < \infty$.
% Furthermore, 
% by assumption $f$ is $L^1(S^1)$ and of bounded variation, hence $|\hat f_k| \le \frac{\mathrm{Var}(f)}k$,
% hence $\sum_k k|\hat f_k|^2 \le 
These facts can be seen as an easier version of \cite[Proposition 4.5]{CW05}\cite[Lemma 2.2]{Weiner06}.

\subsection{The Virasoro nets}\label{virasoro}
On the Hilbert space of the $\uone$-current net, one can construct another quantum field.
Let $L_n = \frac12 \sum_{m} :J_{-m}J_{n+m}:$ where the normal ordering means $J_k$
with negative $k$ comes to the left \cite[(2.9)]{KR87}.
Then they satisfy the Virasoro algebra $[L_m, L_n] = (m-n)L_{m+n} + c\frac{m(m^2-1)}{12}$
with the central charge $c=1$.
The stress-energy tensor $T(f) = \sum \hat f_n L_n$ is essentially self-adjoint on $\hfin$
and satisfies the commutation relations
$[T(f), J(g)] = i J(fg')$. Especially, if $f$ and $g$ have disjoint support, they commute.
By the linear energy bound \cite[Proposition 4.5]{CW05}\cite[Lemma 3.2]{BT13} and
the Driessler-Fr\"ohlich theorem \cite[Theorem 3.1]{DF77}(or its adaptation to the case of operators \cite[Theorem 3.4]{BT13}),
they actually strongly commute.
For $\g \in \diff(S^1)$, it holds that $\Ad U(\g)(T(f)) = T(\g_*f) + \b(\g,f)$ with a certain $\b(\g,f)\in\RR$,
where $\g_*f$ is the pullback of $f$ as a vector field by $\g$ \cite[Proposition 3.1]{FH05}.
Moreover, $T(f)$ integrates to the representation $U$ in the sense
that $U(\Exp(tf)) = e^{itT(f)}$ up to a phase \cite[Section 4]{GW85}, see also \cite[Proposition 5.1]{FH05}.

Let us denote $\B_1(I) = \{e^{iT(f)}:\supp f\subset I\}''$
and call it the {\bf Virasoro subnet}.
By Haag duality, it holds that $\B_1(I) \subset \A_\uone(I)$.
Furthermore, by the Reeh-Schlieder argument, $\H_{\Vir_1} := \overline{\B_1(I)\Omega}$ does not depend on $I$.
The restriction of $\B_1(I)$ and  $U$ to $\H_{\Vir_1}$ together with $\Omega \in \H_{\Vir_1}$
is called the {\bf Virasoro net with $c=1$} and we denote it by $(\vir_1, U_1, \Omega_1)$.

We can also consider the real line picture for $T$.
For a vector field $f$ on $\RR$, or more precisely $f(t)\frac d{dt}$,
we define $T(f) := T(C^{-1}_*f)$.
Conversely, a vector field $f$ on $S^1$ corresponds to $C_*f(t) = \frac{t^2+1}2 f(C^{-1}(t))$.

In \cite[Proof of Theorem 4.7]{CLTW12-2}, we also computed the exponentiation of the relation between $J(g)$ and $T(f)$:
\begin{align*}%\label{eq:weylT}
 \Ad W(g)(T(f)) = T(f) - J(fg') + \frac12\s(fg', g),
\end{align*}
where this equality holds on $\hfin$.

In order to obtain the Virasoro nets with $c > 1$, we need to perturb $T$ as in \cite[(4.6)]{BS90}.
We have computed its real-line picture \cite[Section 5.3]{CLTW12-2}:
\begin{align*}%\label{eq:virc}
 T^\k(f) = T(f) + \k J(f'),
\end{align*}
then $T^\k$ satisfies the commutation relation
\[
 [T^\k(f), T^\k(g)] = iT^\k(fg'-f'g) + i \frac {1+\k^2}{12}\int_\RR f'''(t)g(t)dt,
\]
hence the central charge is $c = 1+\k^2$.
Define $\B_c(I) := \{e^{iT^\k(f)}: \supp f\subset I\}''$.
We have seen that $\B_c(I) \subset \A_{\uone}(I)$, but $\B_c(I)$ is covariant
only with respect to $U$ restricted to translations and dilations.
Yet, $\H_{\Vir_c} := \overline{\B_c(I)\Omega}$ does not depend on $I$ again
by the Reeh-Schlieder argument. It was shown in \cite[Section 4, (4.8)]{BS90} that
by exploiting the M\"obius invariance of $n$-points functions of $T^\k$,
the restriction of $\B_c$ to $\H_{\vir_c}$ can be extended to a conformal net,
the Virasoro net $\vir_c$ with central charge $c = 1+\k^2$.
$\vir_c, c>1$ does not satisfy strong additivity \cite[Section 4, (4.13)]{BS90}.
It is an open problem whether the dual net of $\vir_c$ is conformal ($\diff(S^1)$-covariant).

\section{Ground states}\label{groundstates}
\subsection{On the \texorpdfstring{$\uone$}{U(1)}-current net}
Following Lemma \ref{lm:composed}, we construct automorphisms of $\A_{\uone}|_\RR$ commuting with translations.
This has been done in \cite[Proposition 4.1]{CLTW12-2}. Let us recall it in our present notations.
\begin{lemma}\label{lm:u1auto}
 For $q \in \RR$, there is an automorphism $\a_q$ of $\A_{\uone}|_\RR$ commuting with translations
 such that $\a_q(W(f)) = e^{iq\int_\RR f(t)dt}W(f)$.
\end{lemma}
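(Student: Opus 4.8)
The plan is to define $\a_q$ directly on the Weyl generators by the prescribed formula, to check compatibility with the Weyl relations, and then to promote it to a normal automorphism of each local algebra $\A_\uone(I)$ by exhibiting a local unitary implementer. First I would set $\a_q(W(f)) := e^{iq\int_\RR f(t)\,dt}W(f)$ on the $*$-algebra generated by the $W(f)$ with $\supp f\Subset\RR$. Since $f\mapsto q\int_\RR f(t)\,dt$ is real-linear, hence additive, the map $f\mapsto e^{iq\int_\RR f(t)\,dt}$ is a character of the additive group of test functions; a short computation using $W(f)W(g)=e^{-\frac i2\s(f,g)}W(f+g)$ then shows that the phases $e^{iq\int_\RR(f+g)\,dt}$ are consistent with the cocycle, so $\a_q$ respects the Weyl relations and defines a $*$-automorphism of this algebra. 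Intuitively, $\a_q$ is the shift $J(f)\mapsto J(f)+q\int_\RR f(t)\,dt\cdot\1$, which leaves the commutator \eqref{eq:u1comm} unchanged.

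The essential point is normality: one must show that $\a_q$ extends $\s$-weakly to each $\A_\uone(I)$. I would do this by locally implementing $\a_q$ by a Weyl operator. Fix $I\Subset\RR$ and seek $h_I$ with $\Ad W(h_I)(W(f))=\a_q(W(f))$ whenever $\supp f\subset I$. Using $\Ad W(h)(W(f))=e^{-i\s(h,f)}W(f)$ together with $\s(h,f)=\int_\RR h f'\,dt=-\int_\RR h' f\,dt$, the requirement becomes $h_I'(t)=q$ for $t\in I$. Thus any $h_I$ whose derivative equals $q$ on $I$ does the job on $\A_\uone(I)$, and $\Ad W(h_I)$, being a normal automorphism of $\B(\H)$, carries $\A_\uone(I)=\{W(f):\supp f\subset I\}''$ onto $\{e^{iq\int_\RR f\,dt}W(f):\supp f\subset I\}''=\A_\uone(I)$; this exhibits $\a_q|_{\A_\uone(I)}$ as a normal automorphism.

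The one subtlety, and the main obstacle, is that $W(h_I)$ must actually exist, i.e.\ $h_I$ must be admissible in the sense of Lemma \ref{lm:nonsmooth}. In the $S^1$-picture a function with a jump at the point $\infty$ has $\sum_{k\ge0}k|\hat h_k|^2=\infty$, so I cannot simply take $h_I(t)=qt$ near $I$ and cut it off arbitrarily. Instead I would choose $h_I$ continuous and piecewise smooth on $S^1$ with $h_I'=q$ on $I$ and with $h_I'$ taking a compensating slope on some interval disjoint from $I$, so that $\int_\RR h_I'\,dt=0$ and $h_I$ returns to a single value at $\infty$; then $h_I$ is continuous and piecewise smooth, Lemma \ref{lm:nonsmooth} applies, and $W(h_I)$ is a genuine unitary. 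Since the compensating part of $h_I'$ is supported away from $I$, it does not affect $\Ad W(h_I)$ on $\A_\uone(I)$.

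Finally I would assemble the family $\{\a_q|_{\A_\uone(I)}\}_{I\Subset\RR}$ into an automorphism of $\A_\uone|_\RR$ and verify translation invariance. Compatibility $\a_{\tilde I}|_{\A_\uone(I)}=\a_I$ for $I\subset\tilde I$ holds because both maps are normal and agree on the generators $W(f)$ with $\supp f\subset I$. That $\a_q$ commutes with $\Ad U(\tau(t))$ follows from $\Ad U(\tau(t))(W(f))=W(f(\slot-t))$ together with the translation invariance $\int_\RR f(t-s)\,dt=\int_\RR f(t)\,dt$ of the defining phase; both composites send $W(f)$ to $e^{iq\int_\RR f\,dt}\,W(f(\slot-t))$.
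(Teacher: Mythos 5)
Your proof is correct and follows essentially the same route as the paper: $\a_q$ is implemented on each $\A_\uone(I)$ by $\Ad W(h_I)$ with $h_I'=q$ on $I$, shown to be independent of the choice of $h_I$ and hence compatible across intervals, with translation covariance coming from the translation invariance of $\int_\RR f(t)\,dt$. The only (harmless) difference is that your detour through Lemma \ref{lm:nonsmooth} is unnecessary: since $I$ is bounded one can take $h_I$ smooth and compactly supported in $\RR$ with $h_I(t)=qt$ on $I$ (as the paper does with $W(qs_I)$), which is already an ordinary test function of the net, so no admissibility issue at $\infty$ arises.
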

\begin{proof}
 Let $I \Subset \RR$. There is a smooth function $s_I$ with compact support such that $s_I(t) = t$.
 By commutation relation \eqref{eq:u1comm}, for $f$ with $\supp f \subset I$
 we have
 \[
 \Ad W(qs_I)(W(f)) = e^{iq\int_\RR s_I'(t)f(t)dt}W(f) = e^{iq\int_\RR f(t)dt}W(f).
 \]
 We define $\a_{q,I} = \Ad W(qs_I)$. This does not depend on the choice of $s_I$ within the restriction above,
 because $\A_{\uone}(I)$ is generated by $W(f)$ with $\supp f \subset I$.
 $\a_q$ commutes with translations because $\int_\RR f(t)dt$ is invariant under translations.
\end{proof}

From here, it is immediate to construct ground states by Lemma \ref{lm:composed}.
We collect some properties of the resulting GNS representations.
\begin{theorem}\label{th:u1}
 We have the following.
 \begin{itemize}
  \item  The states $\{\omega \circ \a_q\}$ are mutually different ground states of $\A_{\uone}|_\RR$,
  and they are connected with each other by dilations.
  Dilations are not implemented unitarily.
  \item %The function $\omega\circ\a_q(W(sf))$ is differentiable in $s$ at $s=0$, therefore,
  The GNS vector $\Omega_q$ of $\omega \circ \a_q$ is in the domain of $J_q(f)$,
  where $\a_q(W(f)) = e^{iJ_q(f)}$ and $J_q(f) = J(f) + q\int_\RR f(t)dt$.
  Especially, it holds that $\omega \circ \a_q(J(f)) = q\int_\RR f(t)dt$
  \item The GNS representations $\{\a_q\}$ do not extend to $\A(\RR_\pm)$ in the weak operator topology,
  hence are not solitonic.
  \item The dual nets $\{\hat \a_q(\A_{\uone}(I))\}$ are equal to the original net $\A_{\uone}$.
 \end{itemize}
\end{theorem}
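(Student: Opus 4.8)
\emph{Ground states and dilations.} For the first assertion I would rely on the machinery already set up. Since $\a_q$ commutes with translations (Lemma \ref{lm:u1auto}), Lemma \ref{lm:composed} shows each $\omega\circ\a_q$ is a ground state with GNS triple $(\a_q,\H,\Omega)$. Mutual difference is then read off from $\omega\circ\a_q(W(f)) = e^{iq\int_\RR f(t)\dd t}\,\omega(W(f))$ evaluated on a test function with $\int_\RR f\neq0$. For the dilation statement I would record how $\d(s)$ rescales the linear functional $f\mapsto\int_\RR f$: from $\Ad U(\d(s))(W(f))=W(f\circ\d(s)^{-1})$ and $\int_\RR f\circ\d(s)^{-1} = e^{s}\int_\RR f$ one obtains $\Ad U(\d(-s))\circ\a_q\circ\Ad U(\d(s))=\a_{e^{s}q}$, hence $\omega\circ\a_q\circ\Ad U(\d(s)) = \omega\circ\a_{e^{s}q}$ by dilation invariance of $\omega$. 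Proposition \ref{pr:dilated}, applicable since $\omega\circ\a_q\neq\omega$ for $q\neq0$, then gives the inequivalence of the GNS representations and, through the remark following it, the non-implementability of the dilations.

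\emph{The current in the GNS representation.} Here I would work in the GNS representation $(\a_q,\H,\Omega_q)$ with $\Omega_q=\Omega$. Feeding the rescaled group $s\mapsto W(sf)=e^{isJ(f)}$ through $\a_q$ gives $\a_q(W(sf)) = e^{isq\int_\RR f}W(sf) = e^{is(J(f)+q\int_\RR f)}$, so by Stone's theorem the generator is $J_q(f)=J(f)+q\int_\RR f(t)\dd t$. As $\Omega\in\hfin\subset\dom J(f)$ and the correction is a scalar, $\Omega_q\in\dom J_q(f)$; moreover $\langle\Omega,J_q(f)\Omega\rangle = \langle\Omega,J(f)\Omega\rangle + q\int_\RR f = q\int_\RR f$, the first term vanishing because $J_n\Omega=0$ for $n\ge0$.

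\emph{Failure of solitonicity.} The third assertion is the substantive one, and the plan is to defeat any normal extension with an explicit sequence. I would pick $h$ supported in $\RR_+$ that is continuous and piecewise smooth, so that $\sum_{k\ge0}k|\hat h_k|^2<\infty$ and $W(h)$ exists by Lemma \ref{lm:nonsmooth} and belongs to $\A_\uone(\RR_+)''$, but with divergent line integral $\int_\RR h(t)\dd t=\infty$; a profile behaving like $1/t$ at infinity serves, since under the Cayley transform this is only a corner at the image of $\infty$, still piecewise smooth. Choosing compactly supported $f_n$ with $\supp f_n\subset\RR_+$ and $J(f_n)\Omega\to J(h)\Omega$, Lemma \ref{lm:nonsmooth} gives $W(f_n)\to W(h)$ strongly while $\int_\RR f_n\to\infty$. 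If $\a_q$ extended to a normal representation $\pi$ of $\A_\uone(\RR_+)$, then $\langle\Omega,\pi(W(f_n))\Omega\rangle = e^{iq\int_\RR f_n}e^{-\frac12\|J(f_n)\Omega\|^2}$ would converge; but the exponential prefactor tends to $e^{-\frac12\|J(h)\Omega\|^2}\neq0$ whereas $e^{iq\int_\RR f_n}$ does not converge for $q\neq0$, a contradiction. The one delicate point, and the main obstacle, is producing $h$ with $\int_\RR h=\infty$ yet $\|J(h)\Omega\|<\infty$, which exploits precisely the gap between the $L^1$ condition on the line and the $H^{1/2}$-type condition on the circle.

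\emph{The dual net.} Finally, since $\a_q$ multiplies each Weyl generator by a scalar phase, it fixes every local algebra: $\a_q(\A_\uone(I)) = \{e^{iq\int_\RR f}W(f):\supp f\subset I\}'' = \A_\uone(I)$ for $I\Subset\RR$. Hence the net appearing in the GNS representation is, as a net of subalgebras of $\B(\H)$, the original local net, so the dual-net construction feeds on the same half-line algebras and returns the dual net of $\A_\uone$ in the vacuum. Because $\A_\uone$ is strongly additive, that dual net is $\A_\uone$ itself, which is the asserted equality.
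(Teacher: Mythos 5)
Your proposal is correct and follows essentially the same route as the paper: Lemma \ref{lm:composed} and Proposition \ref{pr:dilated} for the first item, identification of the generator $J_q(f)=J(f)+q\int_\RR f$ for the second, a piecewise-smooth function decaying like $1/t$ (so $H^{1/2}$ on the circle but not $L^1$ on the line) approximated by compactly supported functions to defeat normality on half-lines, and strong additivity of $\A_\uone$ plus $\a_q(\A_\uone(I))=\A_\uone(I)$ for the dual net. The only cosmetic difference is that you get the domain statement directly from Stone's theorem applied to $s\mapsto\a_q(W(sf))$ (legitimate, since the perturbation is a scalar), where the paper argues via invariance of $\H^\infty$ and the commutator theorem.
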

\begin{proof}
 As $\{\a_q\}$ are automorphisms commuting with translations,
 the states $\{\omega \circ \a_q\}$ are ground states by Lemma \ref{lm:composed},
 and their GNS representations are $\{\a_q\}$.
 They are different states because they give the scalar $e^{iq\int_\RR f(t)dt}$ to $W(f)$.
 It follows from Proposition \ref{pr:dilated} that
 they are connected by dilations and dilations are not implemented unitarily.
 
 Let us recall the action $\a_q(W(sf)) = e^{iqs\int_\RR f(t)dt}W(sf)$.
 From this it is immediate that
 \[
  \frac{d}{ds}\a_q(W(sf))\Omega|_{s=0} = iJ(f)\Omega + iq\int_\RR f(t)dt\cdot \Omega.
 \]
 We also showed \cite[Lemma 4.6]{CLTW12-2} that
 $\H^\infty = \bigcap_n \dom(L_0^n)$ is invariant under Weyl operators $W(g)$,
 hence we have, the following equation of operators on $\H^\infty$:
 \[
  \Ad W(q\s_I)(J(f)) = J(f) + q\int_\RR f(t)dt.
 \]
 $\H^\infty$ is a core of $L_0$, and hence of $J(f)$
 by the commutator theorem \cite[Theorem X.37]{RSII} and the estimate
 $\|J(f)\xi\| \le c_f\|(L_0+1)^\frac12 \xi\|$, see the computations in \cite[Proposition 1]{Lechner03}
 (this estimate is known since \cite[below (2.23)]{BS90}.
 Since $\H^\infty$ is invariant under $J(f)$, to apply the commutator theorem with slightly different assumptions \cite[Theorem X.36]{RSII},
 it is enough to have a linear bound $\|J(f)\xi\| \le c_f\|(L_0+1) \xi\|$).

 To show the non-normality on the half line $\RR_+$ ($\RR_-$ is analogous),
 we take a sequence of smooth functions $g_n(-e^{i\theta})$ on $S^1$ (see Figure \ref{fig:g}):
 \[
  g_n(-e^{i\theta}) = \left\{\begin{array}{ll}
                      0 & \text{ for } 0 \le \theta < \pi  \\
                      \theta - \pi & \text{ for } \pi \le \theta < \frac{3\pi}2 \\
                      -\theta + 2\pi & \text{ for } \frac{3\pi}2 \le \theta < 2\pi - \frac1{2n} \\
                      - 2(\theta - 2\pi + \frac1{2n}) & \text{ for } 2\pi - \frac1n \le \theta < 2\pi - \frac1{2n} \\
                      0 & \text{ for } 2\pi - \frac1{2n}\le \theta \le 2\pi
                     \end{array}\right.
 \]
\begin{figure}[ht]\centering
\begin{tikzpicture}[scale=1.2]
        \draw [->] (-1,0) --(5,0) node [above right] {$\theta$};
         \draw [->] (0,-1)--(0,2) node [ right] {$g(-e^{i\theta})$};
          \draw [ultra thick] (0,0)-- (2,0)-- (3,1)--(3.6,0.4)-- (3.8,0)-- (4,0) ;
          \draw [dotted] (0,1)-- (3,1) ;
          \draw [thick](3.6,0.4)-- (4,0) ;
          \node at(0.2,-0.3) {$0$};
          \node at(4,-0.3) {$2\pi$};
          \node at(2,-0.3) {$\pi$};
          \node at(-0.3,1) {$\frac\pi 2$};
\end{tikzpicture}
\caption{The functions $g_n(-e^{i\theta})$ (the thick line) and $g(-e^{i\theta})$ (the thin line above),
restricted to $0 \le \theta \le 2\pi$.}
\label{fig:g}
\end{figure}
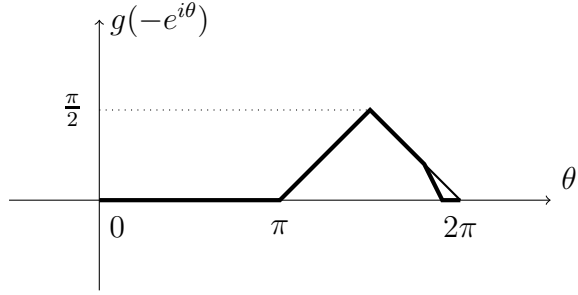
 It is clear that each $g_n$ is piecewise smooth (linear) and continuous,
 and both $g_n, g_n'$ converges in the $L^2$-norm to
 \[
  g(-e^{i\theta}) = \left\{\begin{array}{ll}
                      0 & \text{ for } 0 \le \theta < \pi  \\
                      \theta - \pi & \text{ for } \pi \le \theta < \frac{3\pi}2 \\
                      -\theta + 2\pi & \text{ for } \frac{3\pi}2 \le \theta < 2\pi \\
                     \end{array}\right..
 \]
 Therefore, $J(g_n)\Omega \to J(g)$ and hence $W(g_n) \to W(g)$ in the strong operator topology, and in the weak operator topology, too.
 On the other hand,
 $\a_q(W(g_n)) = e^{iq_n}W(g_n)$, where
 \[
  q_n = q \int_\RR g_n(C(t))dt.
 \]
 Note that, $C(t) = -\frac{t-i}{t+i}$, and $g(C(t))$ behaves as $\frac 1t$ as $t \to \infty$,
 hence $q_n \to \infty$. Accordingly, $\a_q(W(g_n)) = e^{iq_n}W(g_n)$
 is not convergent.
 
 The last claim follows immediately because $\a_q$ are automorphisms, hence
 $\a_q(\A_\uone(I)) = \A_\uone(I)$ and by strong additivity of $\A_\uone$. 
\end{proof}
In this case it is trivial that the dual net coincides with the original net.
In contrast, we have a markedly different result for $\vir_c$.

The fact that $\omega\circ \a_q(J(f)) = q\int_\RR f(t)dt$ 
suggests that $q$ is an analogue of the lowest weight $g$ of \cite[Section 1C]{BMT88}, cf.\! \cite[Lemmas 5.3, 5.6]{Tanimoto11}
and the state $\Omega_q$ has uniform charge density $q$.

\paragraph{The $\uone$-current as a multiplier representation of the loop group $SLS^1$.}
Let $SLS^1$ be the group of smooth maps from $S^1$ into $S^1$ with winding number $0$ with pointwise multiplication \cite[Section 5.1]{Ottesen95}.
These elements can be identified with smooth functions on $S^1$ in $\RR$ by logarithm,
and the group operation becomes addition.
Recall the Weyl relation $W(f)W(g) = e^{-\frac i 2 \s(f,g)}W(f+g)$:
this can be seen as a multiplier representation of $SLS^1$ with the cocycle $e^{-\frac i 2 \s(f,g)}$.
From this perspective, one can also consider the subgroup $S\Omega S^1$
of functions $f$ on $S^1$ such that $f^{(n)}(-1) = 0$ for all $n\in\NN$
\cite[Section 2.1.2]{Henriques17}.

Let us observe that $f\mapsto \a_q(W(f))$ is a multiplier representation
of functions $f$ whose support does not contain $-1$,
because for such $f$, $\a_q(W(f))$ is defined and respects product.
We claim that this extends to $f\in S\Omega S^1$.
Indeed, for such $f$, $f(C(t))$ is a rapidly decreasing smooth function, hence
we can find compactly supported test functions $f_n \to f$ in the topology of Schwartz-class functions,
and then $\a_q(W(f_n)) = e^{iq\int_\RR f_n(t)dt}W(f_n) \to e^{iq\int_\RR f(t)dt}W(f)$.
On the other hand, we have seen in Theorem \ref{th:u1} that
the representation $\a_q$ is not normal (continuous in the $\s$-weak topology, which coincides with
the weak operator topology on norm-bounded sets) on any half-line.

Henriques conjectured that any positive-energy representation of $\Omega G$
should be normal on half-lines \cite[Conjecture 34]{Henriques17}, where $G$ is a simple and simply connected compact group.
The above observation shows that an analogous conjecture for $S\Omega S^1$ does not hold.
On the other hand, we have also shown that for such $G$, the only ground state of the algebra $\mathscr{S}\mathfrak{g}_\CC$
of Schwartz-class maps in $\mathfrak{g}_\CC$ is the vacuum state \cite[Corollary 5.8]{Tanimoto11}.
This supports to some extent the conjecture for $\Omega G$.
Yet, there are many other positive-energy representations of $\Omega G$ without ground state,
and the situation is not conclusive.

% For $c > 1$, a representation of $\diff_c(\RR)$ can be realized by local fields in
% the $\uone$-current net \cite{BS90, CLTW12-2}. the $\uone$-current net admits many ground state representations
% parametrized by $q \in \RR$.

\subsection{On Virasoro nets}
We can simply restrict the states $\{\omega \circ \a_q\}$ to the Virasoro nets $\vir_c$.
Let us denote $\H^\k_q = \overline{\a_q(\vir_c(I))\Omega}$, where $c = 1 + \k^2$
and this does not depend on $I$ by the Reeh-Schlieder argument.
Let us denote the GNS representation $P_{\H^\k_q}\a_q$ by $\r_{\k,q}$,
where $P_{\H^\k_q}$ is the projection onto $\H^\k_q$.

\begin{theorem}
 We have the following.
 \begin{itemize}
  \item  The states $\{\omega \circ \a_q\}$ are mutually different ground states of $\vir_c|_\RR$
  for different $\frac{q^2}2$,
  and they are connected with each other by dilations.
  Dilations are not implemented unitarily in $\r_{\k,q}$.
  \item %The function $\omega\circ\a_q(e^{isT(f)})$ is differentiable in $s$ at $s=0$, therefore,
  The GNS vector $\Omega_q$ of $\omega \circ \a_q$ is in the domain of $T^\k_q(f)$,
  where $\a_q(e^{isT^\k(f)}) = e^{iT^\k_q(f)}$ and $T^\k_q(f) = T(f) + qJ(f) + \k J(f') + \frac{q^2}2\int_\RR f(t)dt$
  on $\H^\infty$.
  \item If $c > 1$, the dual nets $\{\hat \r_q(\vir_c(I))\}$ are not unitarily equivalent to any of $\vir_{c'}$, $c' >  1$.
 \end{itemize}
\end{theorem}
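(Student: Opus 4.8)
The whole theorem is obtained by transporting the $\uone$-current results of Theorem \ref{th:u1} across the inclusion $\vir_c(I)\subset\A_\uone(I)$. Since $\vir_c|_\RR$ is a translation-covariant subnet of $\A_\uone|_\RR$ and $\omega\circ\a_q$ is already a ground state of $\A_\uone|_\RR$, its restriction to $\vir_c|_\RR$ is again translation-invariant, and its GNS representation $\r_{\k,q}=P_{\H^\k_q}\a_q$ is the compression of the positive-energy representation $\a_q$ to the invariant subspace $\H^\k_q$ (invariant because $\a_q$ commutes with translations and $U(\tau(t))\Omega=\Omega$, so $U(\tau(t))\a_q(x)\Omega=\a_q(\tau_t x)\Omega$ stays in the interval-independent closure). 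Hence $\r_{\k,q}$ carries positive-energy translations and each $\omega\circ\a_q|_{\vir_c}$ is a ground state. The computational core is the second bullet: feeding $g=qs_I$ (so $g'\equiv q$ on $\supp f$) into the exponentiated relation $\Ad W(g)(T(f))=T(f)-J(fg')+\frac12\s(fg',g)$ and combining it with the current shift $\a_q(J(f'))=J(f')$ (as $\int_\RR f'=0$), I would compute $\a_q(T^\k(f))$ and identify it with the displayed generator $T^\k_q(f)$. That $\Omega_q=\Omega\in\dom T^\k_q(f)$ then follows exactly as in Theorem \ref{th:u1}: $\Omega$ lies in the domain of each of $T(f)$, $J(f)$ and $J(f')$ on $\H^\infty$, the last term is a scalar, and the same commutator-theorem and linear-energy-bound estimates control the sum.

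For the first bullet, the generator $T^\k_q$ immediately yields the distinguishing invariant: since the vacuum expectations of $T(f)$, $J(f)$ and $J(f')$ all vanish, one gets $\omega\circ\a_q(T^\k(f))=\langle\Omega,T^\k_q(f)\Omega\rangle=\frac{q^2}2\int_\RR f(t)\,dt$, so two parameters with different $\frac{q^2}2$ give different states on $\vir_c|_\RR$. The dilation link is inherited from the $\uone$ picture: a direct computation on Weyl operators (using invariance of $\omega$ under $\psl2r$ and the rescaling $\int_\RR f\mapsto e^{s}\int_\RR f$ under $\d(s)$) shows $\omega\circ\a_q\circ\Ad U(\d(s))=\omega\circ\a_{e^{s} q}$, so all nonzero values of $\frac{q^2}2$ lie on a single dilation orbit while $q=0$ is the fixed vacuum. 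Non-implementability of $\d(s)$ in $\r_{\k,q}$ then follows from the logic of the remark after Proposition \ref{pr:dilated}: a unitary implementing $\d(s)$ would force $\r_{\k,q}\cong\r_{\k,e^{s} q}$, and these GNS representations must be unitarily inequivalent since $q^2\neq e^{2s}q^2$ for $s\neq0$, $q\neq0$. Establishing this last inequivalence is the delicate point of the first bullet; I would obtain it by adapting the argument of Proposition \ref{pr:dilated} (i.e.\ \cite[Lemmas 4.4, 4.5]{CLTW12-1}) to the compressed representations $\r_{\k,q}$, using that distinct charge densities $\frac{q^2}2$ cannot be intertwined.

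For the third bullet I would use strong additivity as the discriminating invariant. Having shown $\r_{\k,q}$ to be a ground state representation, the general construction recalled in Section \ref{groundgeneral} applies: via \cite[Corollary 1.9]{GLW98} the dual net $\hat\r_q(\vir_c)$ extends to a \emph{strongly additive} M\"obius covariant net on $S^1$. On the other hand, for every $c'>1$ the net $\vir_{c'}$ fails strong additivity \cite[Section 4, (4.13)]{BS90}, and strong additivity is preserved under unitary isomorphism of nets; hence $\hat\r_q(\vir_c)$ cannot be unitarily equivalent to any $\vir_{c'}$ with $c'>1$. This also explains the restriction to $c'>1$: the Virasoro nets with $c'\le 1$ are strongly additive and so cannot be separated from the dual net by this argument. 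The main obstacle is guaranteeing that the Guido--Longo--Wiesbrock machinery genuinely applies to $\r_{\k,q}$ — that the half-line algebras together with the translations and $\Omega_q$ form a Borchers triple, with $\Omega_q$ separating for the half-line algebra and cyclic for the relevant relative commutant — so that the dual net is an honest strongly additive M\"obius net rather than a degenerate object; once this is secured, the strong-additivity dichotomy closes the argument.
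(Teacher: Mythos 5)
Your route is, for the most part, the paper's own: restrict the $\uone$ ground states to $\vir_c$, compute $T^\k_q(f)$ by conjugating $T^\k(f)=T(f)+\k J(f')$ with $W(qs_I)$ term by term on $\H^\infty$ (your derivation from $\Ad W(g)(T(f))=T(f)-J(fg')+\tfrac12\s(fg',g)$ with $g=qs_I$ gives $-qJ(f)$ where the statement has $+qJ(f)$; this sign discrepancy is already present in the paper and is harmless, since $\omega(J(f))=0$ and the sign is absorbed by $q\mapsto -q$), distinguish the states by $\omega\circ\a_q(T^\k(f))=\tfrac{q^2}2\int_\RR f(t)dt$, obtain the dilation orbit from $\omega\circ\a_q\circ\Ad U(\d(s))=\omega\circ\a_{e^sq}$, and rule out equivalence with $\vir_{c'}$, $c'>1$, by strong additivity of the dual net versus its failure for $\vir_{c'}$. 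Two points deserve attention. First, for non-implementability of dilations you reduce to the unitary inequivalence of $\r_{\k,q}$ and $\r_{\k,e^sq}$ and propose to ``adapt Proposition \ref{pr:dilated}''; but that proposition concerns automorphisms of the ambient net, whereas $\r_{\k,q}=P_{\H^\k_q}\a_q$ is a compression, so the adaptation is not automatic and you do not actually supply the argument that ``distinct charge densities cannot be intertwined''. The paper sidesteps this: uniqueness of the translation-invariant vector makes $\r_{\k,q}$ irreducible, and a unitary implementing $\d(s)$ would carry the ground state vector to a second translation-invariant vector inducing the different state $\omega\circ\a_{e^sq}$, contradicting uniqueness. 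Either give the inequivalence argument in full (e.g.\ via uniqueness, up to phase, of the invariant vector and of the positive-energy implementation of translations in an irreducible representation) or switch to the paper's direct argument. Second, you omit the check that $\omega\circ\a_q$ and $\omega\circ\a_{-q}$ restrict to the \emph{same} state on $\vir_c$; the paper establishes this with the automorphism $J(f)\mapsto J(-f)$, which intertwines $\a_q$ with $\a_{-q}$ and sends $T^\k$ to $T^{-\k}$. Without it, the parametrization by $\tfrac{q^2}2$ is not justified and the claim that the states are all connected by dilations holds only within each sign of $q$, since the dilation orbit $\{e^sq\}$ never crosses $0$. The remaining items (domain of $T^\k_q(f)$, the Borchers-triple hypothesis for the Guido--Longo--Wiesbrock extension, the strong additivity dichotomy) match the paper's proof.
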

\begin{proof}
 Let us first prove the second statement.
 We have seen in \cite[Lemma 4.6]{CLTW12-2}
 that $W(g)$ preserves $\H^\infty$ and for $I \supset \supp f$ it holds on $\H^\infty$ that
 \[
  \Ad W(qs_I)(T(f)) = T(f) + q J(f) + \frac{q^2}2\int_\RR f(t)dt.
 \]
 Furthermore, we have $\Ad W(qs_I)(J(f)) = J(f) + q\int_\RR f(t)dt$,
 and $T^\k(f) = T(f) + \k J(f')$.
 Applying $\Ad W(qs_I)$ term by term, we obtain on $\H^\infty$
 \[
  \Ad W(qs_I)(T^\k(f)) = T(f) + q J(f) + \k J(f')+ \frac{q^2}2\int_\RR f(t)dt,
 \]
 since $\int_\RR f'(t) dt = 0$, and this operator has $\Omega$ in its domain.
 
 To show that they are different states for different $\frac{q^2}2$,
 it is enough to observe that
\[
 \frac{d}{ds}\omega\circ\a_q(e^{isT^\k(f)})
 = \<\Omega, \Ad W(qs_I)(T^\k(f))\Omega\> = \frac{q^2}2\int_\RR f(t)dt,
\]
 giving different values.
 
 Let us also check that these states depend only on $c = 1+\k^2$ and not on $\k$.
 To see this, note that $J(f)\mapsto J(-f)$ is a vacuum-preserving automorphism of $\A|_\RR$
 commuting with translations. This automorphism $\a_-$clearly intertwines $\a_q$ and $\a_{-q}$.
 Under $\a_-$, $T(f)$ is mapped to $T(f)$ itself because it is quadratic in $J$,
 therefore, $T^\k(f)$ is mapped to $T^{-\k}(f)$.
 Since $\omega \circ \a_q$ and $\omega \circ \a_q \circ \a_- = \omega \circ \a_- \circ \a_{-q} = \omega \circ \a_{-q}$ give the same state on $\vir_c$,
 they depend only on $c = 1+\k^2$.
 
 They are connected by dilation $\d(s)$ on the larger algebra $\A_\uone|_\RR$
 and from the action\footnote{Formally, $J$ has scaling dimension $1$:
 $\Ad (U(\d(s))(J(t)) = e^{s}J(e^s t)$.
 By integrating this equation again $f$, we obtain $\Ad U(\d(s))(J(f)) = J(f\circ \d(-s))$,
 where $f\circ \d(-s)(t) = f(e^{-s}t)$.
 Similarly, the stress-energy tensor $T$
 has scaling dimension $2$: $\Ad (U(\d(s))(T(t)) = e^{2s}T(e^s t)$
 and $\Ad U(\d(s))(T(f)) = e^sT(f\circ \d(-s))$.} of dilations $\Ad U(\d(s)) (J(f)) = J(f\circ \d(-s))$
 it follows that
 \begin{align*}
  \Ad [W(qs_I)U(\d(s))] (J(f)) &= J(f\circ \d(-s)) + q\int_\RR f\circ\d(-s)(t)dt \\
  &= J(f\circ \d(-s)) + e^s q\int_\RR f(t)dt \\
  &= \Ad U(\d(s))(J_{e^s q}(f))
 \end{align*}
 Hence $\omega\circ\a_q \circ \Ad U(\d(s))= \omega \circ \a_{e^s q}$,
 
 Since $\Omega_q$ is the unique translation-invariant vector,
 it follows that $\r_{\k,q}$ are irreducible (see the arguments of \cite[Corollary 1.2.3]{Baumgaertel95}).
 If dilations were implemented by unitary operators,
 they would bring a vector state to another vector state which are again translation-invariant,
 which contradicts the uniqueness of invariant vector.
 
 As $\vir_c, c>1$ are not strongly additive while dual nets are strong additive,
 they cannot be unitarily equivalent.
\end{proof}
We expect that the GNS representations $\{\r_{\k,q}\}$ are not normal on $\bigcup_{I \Subset \RR_\pm}\vir_c(I)$.
A candidate for a convergent sequence which is not convergent in $\{\r_{k,q}\}$
is $e^{T(f_n)}$, where $f_n$ are smooth functions on $\RR$, supported in $\RR_+$ and
converging to a function $f$ which decays as $\frac1t$.
We expect that $T(f) + \k J(f') + qJ(f)$ is self-adjoint, while
$q^2\int_\RR f(t)dt$ is divergent. To do this, the commutator theorem would not suffice
(because formally $[L_0, J(f)] = J(f')$ would not be defined for this $f$)
but one would need an analysis similar to \cite[Theorem 4.4]{CW05}

Here the value $\frac{q^2}2$ can be viewed as the energy density of the state $\Omega_q$,
hence the representation as a whole has infinite energy with respect the original vacuum,
cf.\! \cite{Dybalski08} where it was shown that, under certain conditions, translation-invariant
states cannot be created by finite energy.
This also supports that $\r_{\k,q}$ are not normal even on half-lines.

Even for the case $c=1$,
the dual nets $\{\hat \r_q(\vir_1(I))\}$ cannot be easily identified with any known conformal net.
We remark that strong additivity of the original net $\vir_1$
does not appear to imply that $\{\r_q(\vir_c(I))\}$ is already dual,
cf.\! \cite[Lemma 3.2]{CLTW12-1}.

\section{Outlook}\label{outlook}
\paragraph{Classification.}
As we have seen, the ground states we constructed in this paper are characterized
by a number $q$ (for the $\uone$-current net) or $\frac{q^2}2$ (for the Virasoro nets). This is a structure
very similar to the invariant $\psi$ of \cite[Lemma 5.3]{Tanimoto11}.
If one studies corresponding Lie algebras, it should be possible to classify
ground state representations as in \cite[Theorem 5.6]{Tanimoto11}.

More precisely, to the net $\A_\uone|_\RR$ there corresponds
the central extension of the Lie algebra of compactly supported smooth functions on $\RR$
with the relation $[f,g] = \s(f,g)$. To the Virasoro nets $\vir_c|_\RR$
one considers the central extension of smooth vector fields on $\RR$
with $[f,g] = fg'-f'g + \frac{c}{12}\int f(t)g'''(f)dt$.
By positivity of energy, one should be able to determine the representation
in terms of $q$ and $\frac{q^2}2$, respectively.

Actually, for the $\uone$-current net, the direct classification of ground states
might be possible. Some techniques from \cite[Section 4.2]{CLTW12-2} used to classify KMS states should be useful,
although the boundary condition ($t\to t+i\beta$ for $\beta$-KMS states) is missing for ground states.
As for the Virasoro nets, it is still not clear whether such classification results pass to
the Virasoro nets. For that implication, one needs that the ground state vector
is infinitely differentiable. We are currently not able to do this, because
the corresponding Lie group $\diff_\mathrm{c}(\RR)$ of compactly supported diffeomorphisms of $\RR$
do not possess any compact subgroup (in contrast to $\diff(S^1)$ which contains
finite covers of $\psl2r$, which is the key to differentiate representations \cite[Appendix]{Carpi04},
cf.\! \cite{Zellner17}).

\paragraph{Dual nets.}
The most fundamental properties of the dual nets $\{\hat\rho_{\k,q}(\Vir_c(I))\}$ remain open.
Among them is conformal covariance. Conformal covariance implies the split property \cite{MTW18},
and even the split property is unknown to hold in these dual nets.
The split property may fail in the dual net in two-dimensional Haag-Kastler net \cite[Section 4.2]{MT18},
therefore, it may be worthwhile to try to (dis)prove the split property in these nets.

\paragraph{More positive-energy representations.}
Ground states consist only a particular class of positive-energy representations.
It was shown in \cite{DS82, DS83} that there is a huge class of locally normal positive-energy representations
of the free massless fermion field in $(3+1)$-dimensions. A similar construction should be possible
in one dimension. Furthermore, important conformal nets, including some loop group nets, can be realized as subnets of
(the tensor product of ) the free fermion field nets (see \cite[Examples 4.13-16]{Tener16}).
Among them, there might be counterexamples to \cite[Conjecture 34]{Henriques17}.

\subsubsection*{Acknowledgements.}
We acknowledge the MIUR Excellence Department Project awarded
to the Department of Mathematics, University of Rome Tor Vergata, CUP E83C18000100006.

{\small
\def\cprime{$'$} \def\polhk#1{\setbox0=\hbox{#1}{\ooalign{\hidewidth
  \lower1.5ex\hbox{`}\hidewidth\crcr\unhbox0}}} \def\cprime{$'$}

}
% {\small
% \bibliographystyle{alpha}
% \bibliography{../aqft}
% }

\end{document}